\documentclass[sigconf,natbib=true,anonymous=false,screen]{acmart}

\acmSubmissionID{766}

\usepackage{amsmath}
\usepackage{amsthm}
\usepackage{cancel}
\usepackage{mathtools}
\usepackage{graphicx}
\usepackage{subfigure}
\usepackage{xcolor}
\usepackage{siunitx}
\usepackage[inline]{enumitem}
\usepackage{geometry}
\usepackage{booktabs}
\usepackage{multirow}
\usepackage{acronym}
\usepackage[skip=2pt]{caption}

%%
%% \BibTeX command to typeset BibTeX logo in the docs
\AtBeginDocument{%
  }

\newcommand{\rv}[1]{\boldsymbol{#1}} % random variable
\newcommand{\mc}[1]{\mathcal{#1}}
\newcommand{\ms}[1]{\mathsf{#1}}
\newcommand{\mb}[1]{\mathbb{#1}}
\newcommand{\rd}{\rv{d}} % random variable d
\newcommand{\red}[1]{\textcolor{red}{#1}}
\newcommand{\KL}[2]{\ms{KL}\left[#1\;\|\;#2\right]} % KL divergence
\newcommand{\E}{\mb{E}} % expectation
\newcommand{\Var}{\mb{V}} % variance
 % Real domain
% \renewcommand{\P}{\ms{Pr}}
\newcommand{\Ent}{\ms{H}} % entropy
\newcommand{\ind}{\mathbb{I}} % indicator function

\newtheorem{theorem}{Theorem}[section]
\newtheorem{lemma}[theorem]{Lemma}

\newtheorem{remark}[theorem]{Remark}

\newcommand{\headernodot}[1]{\noindent\textbf{#1}}
\newcommand{\header}[1]{\headernodot{#1.}}

\abovedisplayskip=0pt
\abovedisplayshortskip=0pt
\belowdisplayskip=0pt
\belowdisplayshortskip=0pt

\acrodef{ctg}[CTG]{controllable text generation}
\acrodef{mips}[MIPS]{maximum inner product search}
\acrodef{gr}[GR]{generative retrieval}
% bits per byte
\acrodef{bpb}[bpb]{bits per byte}
\acrodef{bpt}[bpb]{bits per token}
\acrodef{llm}[LLM]{large language model}
\acrodef{docid}[docID]{document identifier}
% \acrodef{bog}[Bayes optimal]{Bayes optimal}
\acrodef{kl}[KL]{Kullback–Leibler}
\acrodef{ndcg}[nDCG]{Normalized Discounted Cumulative Gain}
\acrodef{mrr}[MRR]{Mean Reciprocal Rank}

\setlength{\textfloatsep}{3pt plus 1.0pt minus 1.0pt}
\setlength{\floatsep}{3pt plus 1.0pt minus 1.0pt}
\setlength{\intextsep}{3pt plus 1.0pt minus 1.0pt}
\setlength{\dbltextfloatsep}{3pt plus 1.0pt minus 1.0pt}
\setlength{\dblfloatsep}{3pt plus 1.0pt minus 1.0pt}
\begin{CCSXML}
<ccs2012>
   <concept>
       <concept_id>10002951.10003317.10003338</concept_id>
       <concept_desc>Information systems~Retrieval models and ranking</concept_desc>
       <concept_significance>500</concept_significance>
       </concept>
 </ccs2012>
\end{CCSXML}

\ccsdesc[500]{Information systems~Retrieval models and ranking}

\keywords{Generative retrieval; Constrained decoding; Beam search}
%% Rights management information.  This information is sent to you
%% when you complete the rights form.  These commands have SAMPLE
%% values in them; it is your responsibility as an author to replace
%% the commands and values with those provided to you when you
%% complete the rights form.

\copyrightyear{2025}
\acmYear{2025}
\setcopyright{acmlicensed}
\acmConference[SIGIR '25]{Proceedings of the 48th International ACM SIGIR Conference on Research and Development in Information Retrieval}{July 13--18, 2025}{Padua, Italy}
\acmBooktitle{Proceedings of the 48th International ACM SIGIR Conference on Research and Development in Information Retrieval (SIGIR '25), July 13--18, 2025, Padua, Italy}
\acmDOI{10.1145/3726302.3729934}
\acmISBN{979-8-4007-1592-1/2025/07}

\begin{document}

\title{Estimation Error of General Purpose Generative Retrieval}
\title[The Limitations of Constrained Auto-Regressive Decoding in Generative Retrieval]{The Limitations of Constrained Auto-Regressive Decoding\\ in Generative Retrieval}
\title[Constrained Auto-Regressive Decoding Constrains Generative Retrieval]{Constrained Auto-Regressive Decoding\\ Constrains Generative Retrieval}

\author{Shiguang Wu}
\orcid{0000-0002-4597-5851}
\affiliation{%
  \institution{Shandong University}
  % \department{School of Computer Science and Technology}
  \streetaddress{72 Binhai Road, Jimo}
  \city{Qingdao}
  \country{China}
  \postcode{266237}
}
\email{shiguang.wu@mail.sdu.edu.cn}

\author{Zhaochun Ren}
\orcid{0000-0002-9076-6565}
\authornote{Corresponding authors.}
\affiliation{%
  \institution{Leiden University}
  \streetaddress{Snellius, Niels Bohrweg 1}
  \city{Leiden}
  \country{The Netherlands}
  \postcode{2333 CA}
}
\email{z.ren@liacs.leidenuniv.nl}

\author{Xin Xin}
\orcid{0000-0003-4703-7356}
\affiliation{%
  \institution{Shandong University}
  % \department{School of Computer Science and Technology}
  \streetaddress{72 Binhai Road, Jimo}
  \city{Qingdao}
  \country{China}
  \postcode{266237}
}
\email{xinxin@sdu.edu.cn}

\author{Jiyuan Yang}
\orcid{0000-0003-2700-5533}
\affiliation{%
  \institution{Shandong University}
  % \department{School of Computer Science and Technology}
  \streetaddress{72 Binhai Road, Jimo}
  \city{Qingdao}
  \country{China}
  \postcode{266237}
}
\email{jiyuan.yang@mail.sdu.edu.cn}

\author{Mengqi Zhang}
\orcid{0000-0001-6831-0740}
\affiliation{%
  \institution{Shandong University}
  % \department{School of Computer Science and Technology}
  \streetaddress{72 Binhai Road, Jimo}
  \city{Qingdao}
  \country{China}
  \postcode{266237}
}
\email{mengqi.zhang@sdu.edu.cn}

\author{Zhumin Chen}
\orcid{0000-0003-4592-4074}
\affiliation{%
  \institution{Shandong University}
  % \department{School of Computer Science and Technology}
  \streetaddress{72 Binhai Road, Jimo}
  \city{Qingdao}
  \country{China}
  \postcode{266237}
}
\email{chenzhumin@sdu.edu.cn}

\author{Maarten de Rijke}
\orcid{0000-0002-1086-0202}
\affiliation{%
  \institution{University of Amsterdam}
  \streetaddress{Science Park 900}
  \city{Amsterdam}
  \country{The Netherlands}
  \postcode{1098 XH}
}
\email{m.derijke@uva.nl}

\author{Pengjie Ren}
\orcid{0000-0003-2964-6422}
\authornotemark[1]
\affiliation{%
  \institution{Shandong University}
  % \department{School of Computer Science and Technology}
  \streetaddress{72 Binhai Road, Jimo}
  \city{Qingdao}
  \country{China}
  \postcode{266237}
}
\email{jay.ren@outlook.com}

\renewcommand{\shortauthors}{Shiguang Wu et al.}

\begin{abstract}
\Acf{gr} seeks to replace traditional search index data structures with a single large-scale neural network, offering the potential for improved efficiency and seamless integration with generative large language models.
As an end-to-end paradigm, \ac{gr} adopts a learned differentiable search index to conduct retrieval by directly generating document identifiers through corpus-specific constrained decoding.
The generalization capabilities of \acl{gr} on out-of-distribution corpora have gathered significant attention.
Recent advances primarily focus on the problems arising from training strategies, and addressing them through various learning techniques.
However, the fundamental challenges of generalization arising from constrained auto-regressive decoding still remain unexplored and systematically understudied.

In this paper, we examine the inherent limitations of constrained auto-regressive generation from two essential perspectives: \emph{constraints} and \emph{beam search}.
We begin with the Bayes-optimal setting where the \ac{gr} model exactly captures the underlying relevance distribution of all possible documents. 
Then we apply the model to specific corpora by simply adding corpus-specific constraints.
Our main findings are two-fold:
\begin{enumerate*}[label=(\roman*)]
\item For the effect of constraints, we derive a lower bound of the error, in terms of the KL divergence between the ground-truth and the model-predicted step-wise marginal distributions.
This error arises due to the \emph{unawareness} of future constraints during generation and is shown to depend on the average Simpson diversity index of the relevance distribution.
\item For the beam search algorithm used during generation, we reveal that the usage of marginal distributions may not be an ideal approach.
Specifically, we prove that for sparse relevance distributions, beam search can achieve perfect top-$1$ precision but suffer from poor top-$k$ recall performance.
\end{enumerate*}
To support our theoretical findings, we conduct experiments on synthetic and real-world datasets, validating the existence of the error from adding constraints and the recall performance drop due to beam search.
This paper aims to improve our theoretical understanding of the generalization capabilities of the auto-regressive decoding retrieval paradigm, laying a foundation for its limitations and inspiring future advancements toward more robust and generalizable \acl{gr}.
\end{abstract}

\maketitle

\acresetall

\section{Introduction}\label{sec:intro}

The advent of generative models has catalyzed the emergence of \ac{gr} as a new paradigm in information retrieval.
\Ac{gr} provides a potential way to replace the conventional index structure, such as inverted index and vector-based index, with a single large-scale neural network~\cite{metzlerRethinkingSearch2021}.
By integrating the retrieve-then-rank pipeline into an end-to-end framework, \ac{gr} offers the promise of enhanced efficiency. 
Typically, \ac{gr} adopts auto-regressive generative models, e.g., BART~\citep{bart} and T5~\citep{t52020}, trained to generate \acp{docid} given a query.

% problem statement: generalization + constrained beam search
\header{Generalization in neural information retrieval}
% background
% our setting
Generalization is a key problem in neural information retrieval models~\citep{xuGeneralizationAbilityDR2023,yuZeroShotDR2022,gaoZeroShotDR2023,thakur2021beir,xuGeneralizationAbilityDR2023,karpukhinDensePassageRetrieval2020,wangUnsupervisedDomainAdaptationDR2022}.
Similar to the scaling principle in \acp{llm}~\citep{gpt}, many dense retrieval approaches consider training on a high-resource dataset and then evaluated on different domains~\citep{thakur2021beir,karpukhinDensePassageRetrieval2020,wangUnsupervisedDomainAdaptationDR2022,renExaminationZeroShotDR2023,niGeneralizableRetrievers2022}.
In particular, GTR~\citep{niGeneralizableRetrievers2022} demonstrates significant improvements in out-of-domain performance by successfully scaling up the model size and the training corpus.
%Although it was initially proposed for building domain experts~\citep{metzlerRethinkingSearch2021} to integrate with the generative \acp{llm}, 
Unlike dense retrieval, \ac{gr}, as a retrieval paradigm itself, is much more concerned with the generalization abilities to unseen corpora~\citep{chenCorpusBrainPretrain2022,sunLearningTokenizeGenerative2023,askariFewshotIndexing2024,liCorpusLM2024}.

\header{Bayes-optimal \acl{gr}}
Given the success of generative \acp{llm}, \ac{gr} is expected to capture the universal relevance distribution when trained on sufficiently large retrieval datasets.
%with the same architecture and training methodology
While extensive studies have already proposed effective training strategies at large scale~\citep{zengPlanningAheadGenerative2024,zengScalableEffectiveGenerative2023b,liCorpusLM2024,chenCorpusBrainPretrain2022}, the performance of an ideal, \emph{viz.} a Bayes-optimal \ac{gr} model, on unseen corpora has not been systematically studied. 
In this paper, we study an amortized Bayes-optimal auto-regressive \ac{gr} model that fully encapsulates the underlying relevance distribution over the complete corpus containing all possible documents.
%given any query.
On top of this, we apply the constrained auto-regressive generation process to provide a valid \ac{docid} in any given downstream corpus which is a subset of the complete one.

\header{Constrained beam search and generalization}
The core components of \ac{gr} are the differentiable index and the generation of \ac{docid}~\citep{white2025surveyinformationaccess}.
%As the differentiable index is assumed to be perfectly trained in our setting, 
In this paper, we focus on how the generation process affects the generalization of \ac{gr}.
Most existing \ac{gr} models adopt constrained beam search to generate the top-$k$ \acp{docid} as the default retrieval strategy~\citep{tayTransformerMemoryDifferentiable2022a,wangNeuralCorpusIndexer2023,tang2023semantic,sunLearningTokenizeGenerative2023}.
However, \citet{zengPlanningAheadGenerative2024} point out the pitfalls of this strategy, i.e., a greedy local search algorithm is likely to prune false negative \acp{docid} and may thus not be sufficient for developing effective \ac{gr} models.
While these effects are typically entangled with model errors in relevance prediction, the impact of constrained beam search using a Bayes-optimal model with the correct relevance distribution still remains unclear. 
Within our setting, we address this gap by isolating and analyzing the theoretical root of limitations of constrained auto-regressive beam search.

\header{Main findings}
We study the inherent limitations from two essential perspectives: \emph{constraints} and \emph{beam search}.
% For the effect of constraints, we look at the marginal distribution predicted from the model at each step and compare it with the ground-truth distribution.
\begin{enumerate*}[label=(\roman*)]
\item \textbf{Constraints:} In Section~\ref{sec:constraint}, we derive a lower bound on the error, in terms of the KL divergence between the ground-truth and model-predicted step-wise marginal distributions. This error is influenced by the concentration of the relevance distribution and arises from the model's lack of awareness of future constraints during generation. 
\item \textbf{Beam search:} In Section~\ref{sec:recall}, we show that the usage of marginal distributions may not be suitable. We prove that for some sparse relevance distribution with a \emph{thick} tail, beam search can achieve perfect top-$1$ precision, but poor top-$k$ recall performance.
% We also summarize two directions on improving recall:
% \begin{enumerate*}[label=(\roman*)]
% \item \emph{aggregation}, i.e., aggregating the relevant documents within a narrow branch, and 
% \item \emph{amplification}, i.e., amplifying the scores of relevant documents to be exponentially large.
% \end{enumerate*}
% Both directions are meant to provide a high concentration in either local step-wise marginal distribution or global relevance distribution.
{We are aware of existing techniques on improving recall of \ac{gr} model, and summarize two directions based on our analysis:
\begin{enumerate*}[label=(\roman*)]
\item \emph{aggregation}, i.e., aggregating the relevant documents within a narrow branch, and 
\item \emph{amplification}, i.e., amplifying the scores of relevant documents to be significantly large.
\end{enumerate*}
Both directions are essentially meant to provide a high concentration prior on the relevance distribution.
We show that although they are empirically beneficial for small scale corpus, difficulty will arise if we aim to construct a Bayes-optimal \ac{gr}.
}
\end{enumerate*}
% The above results are asymptotic with respect to the vocabulary size of the \ac{gr} model.
% We examine the lower bound of the size required for a sufficiently expressive model, and show that it needs to be exponentially large in terms of the ratio of average raw document length and \ac{docid} length.
Our results provide theoretical grounding of constrained beam search in \ac{gr}, and shows the inherent limitations of this retrieval strategy towards a reliable Bayes-optimal \ac{gr} model.
{
% and the difficulty of directly applying some existing strategies.
Our results also imply the importance of balancing concentration during model training for mitigating this problem.}
% \todo{directions to address this problem or mitigating the effect of decoding?}

\header{Contributions} Our main contributions are as follows: 
\begin{enumerate*}[label=(\roman*)]
    \item We provide theoretical results concerning Bayes-optimal \ac{gr} to determine how constrained beam search affects the generalization.
    \item We decompose the negative effect from two angles, constraints and beam search, and identify a trade-off factor, i.e., the concentration of relevance distribution.
    \item Our theoretical results are verified by experiments on synthetic and real-world datasets.
\end{enumerate*}
\section{Related work}\label{sec:related}

\headernodot{\Acf{gr}} is an emerging direction in neural information retrieval, exploring the possibility of replacing traditional index structures in retrieval systems with a single large-scale neural networks~\citep{liSurveyGenerativeIR2024,white2025surveyinformationaccess}.
It leverages generative models to directly generate the relevant \acp{docid} given a query.
This paradigm originated with~\citet{metzlerRethinkingSearch2021,decaoAutoregressiveEntityRetrieval2020} and has garnered considerable attention~\cite{sunLearningTokenizeGenerative2023,wangNeuralCorpusIndexer2023,liLearningRankGenerative2023,Zhuang2022BridgingTG,Zhang2023TermSetsCB,yangAutoSearchIndexer2023,tang2023semantic,tang2024generative,wuGenerativeRetrievalMultiVector2024,seal2022,tayTransformerMemoryDifferentiable2022a,dynamic-retriever2023,nguyen-2023-generative,zengScalableEffectiveGenerative2023b} in the information retrieval community.

%\header{Generalization in \acl{gr}}
%Although it was initially proposed for building domain experts~\citep{metzlerRethinkingSearch2021}, \ac{gr}, as a retrieval system itself, is much concerned with its generalization ability to out-of-distribution corpora~\citep{sunLearningTokenizeGenerative2023,askariFewshotIndexing2024,cont-learning-gr2023cikm,liu2024robustnessgenerative,liuRobustnessGenerativeRetrieval2023,liSurveyGenerativeIR2024,leeNonparametricDecodingGenerative2023}.
Generalization remains a challenge for \ac{gr}, especially when applied to out-of-distribution corpora~\citep{sunLearningTokenizeGenerative2023,askariFewshotIndexing2024,cont-learning-gr2023cikm,liu2024robustnessgenerative,liuRobustnessGenerativeRetrieval2023,liSurveyGenerativeIR2024,leeNonparametricDecodingGenerative2023}. 
Previous research attributes this challenges to limited model capacity~\citep{leeNonparametricDecodingGenerative2023,yuan2024generative-memory-burden}, lack of learning in the docID construction~\citep{sunLearningTokenizeGenerative2023,yangAutoSearchIndexer2023,Zhang2023TermSetsCB}, and difficulties in learning semantic representations~\citep{tang2023semantic,wangNOVOLearnableInterpretable2023}.
In contrast, our work focuses on the constrained auto-regressive decoding strategy widely applied in \ac{gr}, which is crucial for adapting \ac{gr} models to new corpora dynamically.
Our setting aligns closely to the few-shot indexing approach~\citep{askariFewshotIndexing2024}, where a pre-trained \ac{llm} generates \acp{docid} solely based on its pre-trained knowledge and generalization capabilities, without additional training.
We treat their method as a conceptual blueprint for a fully generalizable \ac{gr} system and aim to investigate the inference stage under this setting.
% askariFewshotIndexing2024
% A highly related topic is the updatable \ac{gr}~\citep{kishoreIncDSI2023,mehtaDSIpp2023,cont-learning-gr2023cikm,guoContinualGenerative2024}
% large scale gr
% unified gr and nlp generation models

Updatable \acl{gr} is another critical task on dynamic corpora.
The primary challenges indicate the cost of updating the model with new documents and the catastrophic forgetting problem~\citep{kishoreIncDSI2023,mehtaDSIpp2023,cont-learning-gr2023cikm,guoContinualGenerative2024}.
Previous efforts have concentrated on developing efficient continual learning strategies by fixing the indexing construction procedure. 
We consider an idealized scenario where the model has full knowledge of all possible documents and focuses solely on generating relevant \acp{docid} on dynamic corpus.

\header{Constrained decoding}
Constrained decoding has been widely studied for guiding machine learning models to produce outputs that satisfy specific conditions~\citep{ahmedNeuroSymbolicLearning2023,mustafaStrcutredOutputPrediction2021}.
Instead of learning to satisfy through training, constraints are more often preferable only in the inference time due to the flexibility and efficiency.
\citet{nishinoGeneralizationAnalysisLearning2022a,nishinoUnderstandingCV2025} demonstrate the preservation of relative errors of certain loss functions in realizable setting.
We instead provide a failure case via establishing the existence of a lower-bound error for auto-regressive models operating under step-wise inference-time constraints.
Recent work on \ac{ctg} in \acp{llm}~\citep[see, e.g.,][]{zhangSurveyControllableText2023} also explores imposing constraints during inference without updating the underlying model~\citep{mireshghallahControllableTextGeneration2022,mudgalControlledDecoding2025,kimCriticGuidedDecoding2023,chakrabortyPrincipledDecodingLLM2024}.
However, many of these approaches do not focus on strictly enforcing constraint satisfaction. A few studies~\citep{kimGuaranteedGenerationLarge2024,honghuaLogicalControl2024,zhangTractableControlAutoregressive2023} propose methods to produce outputs that strictly adhere to constraints, mainly hard keywords inclusion constraints, using tractable probabilistic models or policy gradient techniques. 
Our work differs by focusing on a specific corpus-level constraint, i.e., the set of valid \acp{docid} is sampled from the complete corpus, a problem unique to retrieval tasks.

\headernodot{Beam search} is a widely used heuristic algorithm for decoding structured predictors and has been applied as a non-\ac{mips} setup for large-scale retrieval systems with explicit tree structures~\citep{liTreeIndexDenseRetrieval2023,zhuTreeRecsys2018,zhuoOptimalTreeModels2020,zhuJointTreeIndexRecsys2019}.
Beam search is known to have a performance deterioration, and only few works provided theoretical insights into this issue.
As far as we know, only \citet{zhuoOptimalTreeModels2020} demonstrate a training-test discrepancy in tree-structured models using binary cross-entropy loss. They showed that pseudo-labeling during training does not guarantee that beam search will get the most relevant targets.
In our work, we analyze the marginal distribution of an auto-regressive distribution and provide a theoretical result on the top-$1$ and top-$k$ performance under sparse relevance situations.
\citet{zhuoOptimalTreeModels2020} also provide a Bayes-optimal tree structure, which is often called max-heap assumption~\citep{liTreeIndexDenseRetrieval2023}, and we will discuss the difficulty of enforcing this assumption in our setting in Section~\ref{sub:solution}.
In \ac{gr}, some work have reached the same conclusion that beam search is not sufficient for retrieval as it is likely to prune the relevant \acp{docid} and the model is not able to recover from this~\citep{zengPlanningAheadGenerative2024,liCorpusLM2024,liUnigen2024}.
They propose to use a hybrid retrieval strategy to help bypassing this problem.
We instead focus on understanding the root cause of this problem, i.e., the usage of marginal distribution.

% max heap assumption
\section{Preliminaries}
\label{sec:preliminary}

We formulate \ac{gr} and introduce key notations in this section. 

\header{\Acl{gr}}
Following~\citet{tayTransformerMemoryDifferentiable2022a}, we formulate \ac{gr} where the mapping from documents and \acp{docid} is one-to-one function.
%instead of other common variations using n-gram~\citep{seal2022}.
A corpus, denoted as $\mc{D}$, is a set of documents $d$, with each document represented as a sequence of tokens, i.e., $d=(d_1, \ldots, d_m)$, where $m$ is the length. In this paper, we assume all documents have the same length.
A \acl{gr} model $f$, typically implemented using a sequence-to-sequence architecture such as T5~\citep{t52020} or BART~\citep{bart}, generates a ranked list of the most relevant \acp{docid} in $\mc{D}$ for the given query.
The ranking list is computed through beam search during generation. To ensure the model reliably generates valid \acp{docid} from the corpus, a constrained auto-regressive decoding process $g$ together with the beam search is used.

\header{Bayes-optimal \acl{gr}}
We first assume the complete corpus $\mc{D}$ contains all possible documents of length $m$. Any downstream corpus is therefore a subset of $\mc{D}$.
We denote $f$ as the Bayes-optimal \ac{gr} model on $\mc{D}$ which has the ability to predict the ground-truth relevance distribution over $\mc{D}$ given any query.
The Bayes-optimal model $f$ is considered as an ideal and oracle prototype model which helps us understand the behavior of the generation process.
When $f$ is applied to a downstream corpus $\mc{D}^c\subset \mc{D}$, it uses the corresponding constrained decoding process $g^c$ to predict relevant documents in $\mc{D}^c$. 
This induced \ac{gr} model on $\mc{D}^c$ is denoted as $f^c$.
Note that a similar setting has recently been proposed as zero-shot indexing~\citep{askariFewshotIndexing2024}.

\begin{table}[t]
\caption{Glossary.}
\label{table:notation}
\begin{tabular}{ll}
    \toprule
    \textbf{Symbol} & \multicolumn{1}{c}{\textbf{Description}}
    \\
    \midrule
    $k,\,m$         & the vocabulary size and document length \\
    $\mc{D}$        & the complete corpus, i.e., $[k]^m$ \\
    $\mc{D}^c,\,C$  & downstream corpus and constraints \\
    $f,\,f^c$       & Bayes-optimal and induced downstream \ac{gr} \\
    $\Pr(\cdot \mid q)$    & relevance distribution given query $q$ \\
    $d=(d_1, \ldots,d_m)$   & document in $\mc{D}$ \\
    \bottomrule
\end{tabular}    
\end{table}
\header{Notation}
Table~\ref{table:notation} lists the main notation used in the paper.
For an integer $n$, we denote the set $\{1,\ldots,n\}$ by $[n]$.
We use bold face to denote random variables.
Tokens in a document $d$ are integers from $[k]$, making $d\in [k]^m$, and $k$ is the vocabulary size.
We set the complete corpus $\mc{D}=[k]^m$.
We denote the underlying relevance distribution over $\mc{D}$ given a query $q$ as $\Pr(\cdot \mid q)$, where $\rd \sim \Pr(\rd \mid q) =\prod \Pr(\rd_i \mid \rd_{<i},q)$.
For simplicity, we focus on a single query $q$ and omit it in some context.
We use subscripts to indicate a sliced subset or operations at specific steps, i.e., $\Box_{i}$, and $\Box_{\ge i}$, etc.
Particularly, we refer to $\mc{D}_{\ge i}$ or $\mc{D}^c_{\ge i}$ a branch with root $d_i$, under some prefix $d_{<i}$.

\header{Downstream corpus and constraints}
We construct the downstream corpus $\mc{D}^c$ by sampling.
For simplicity, each document is sampled with an equal probability $p$.
In practice, the sampling is agnostic to the future user queries, and thus independent with $\Pr(\cdot)$.
We use $C$ to be the event that $\rd$ is in $\mc{D}^c$, and $\Pr(\cdot\mid C)$ is the distribution under the corpus $\mc{D}^c$.
$C_{i}$ means the $i$-th token of document $\rd$ is valid with respect to the downstream corpus constraints, given some context-clear prefix tokens $d_{<i}$, i.e., $d_i$ is valid if it appears in $\mc{D}^c$ for the prefix $d_{<i}$.
We use ``constraints'', and ``downstream corpus'' to represent the result of sampling interchangeably.
The constrained generation process $g^c_i$ is applied at the $i$-th step of $f$.
It first zeros out the invalid tokens and then re-normalizes the remaining probabilities.

\section{Theoretical analysis}
\label{sec:main_results}
We investigate the inherent limitations of \ac{gr} arising from constraints and beam search individually.
% We explicitly disentangle these two components.
Our analysis disentangles these factors to isolate their individual effects. 
For constraints, we analyze its effect on the marginal distribution at each generation step.
For beam search, we analyze how it independently degrades recall performance on the complete corpus, disregarding constraints.
These results are asymptotic with respect to the vocabulary size $k$.
In Appendix~\ref{sec:complexity}, we examine the required magnitude of $k$ for a sufficiently expressive model and show that it needs to be exponentially large in terms of the ratio of raw document length and \ac{docid} length.

\subsection{Constraints cause marginal distribution mismatch}
\label{sec:constraint}

We begin by studying the effects of applying constraints to the model performance.
We first identify the factor that causes the error during the generation.
Then we quantitatively analyze the magnitude of this error in
\begin{enumerate*}[label=(\roman*)]
    \item uniform, and
    \item general
\end{enumerate*}
relevance distribution given some query $q$.
We consider the first generation step without loss of generality.
Other cases can be reduced to it by adjusting the document length $m$ or vocabulary size $k$.

\header{Unawareness of future constraints}\label{sub:est_error}
% Since $f^c$ auto-regressively generates the document tokens using the factorized distribution $\Pr(\rd_i \mid \rd_{<i},C)$, we will examine in detail on how \ac{gr} model miscalculates it.
Since $f^c$ is \emph{unaware} of the constraints in the future steps, there may exist biases between the distributions of complete corpus $\mc{D}$ and downstream corpus $\mc{D}^c$.
Specifically, after applying constrained decoding $g^c$, $f^c(q)$ returns
\begin{align}
    \Pr(\rd_1 \mid q,C_1)
    & {} =g^c[\Pr(\rd_1 \mid q)] \\
    & {} \propto \ind[d_1\in \mc{D}^c_1]\sum_{\forall d_{>1}\in\red{\mc{D}_{>1}}} \Pr(d_1d_{>1} \mid q),
    \label{eq:gr_marginal}
\end{align}
where the $C_1$ constraint is satisfied through $g^c$, and $\ind[d_1\in\mc{D}^c_1]$ means $d_1$ is a valid first token in the downstream corpus.
In the contrary, the ground-truth marginal distribution only sum over the documents in $\mc{D}^c$, so we have
\begin{align}
    \Pr(\rd_1 \mid q,C_1,\red{C_{>1}})
    & {} = g^c[\Pr(\rd_1 \mid q,\red{C_{>1}})] \\
    & {} \propto  \ind[d_1\in \mc{D}^c_1]\sum_{\forall d_{>1}\in\red{\mc{D}^c_{>1}}} \Pr(\rd_1d_{>1} \mid q),
    \label{eq:real_marginal}
\end{align}
where $\Pr(\rd_1\mid q,C_1,C_{>1})=\Pr(\rd_1\mid q,C)$.
Here we use the \red{red} mark to highlight the differences from Eq.~\ref{eq:gr_marginal}.
Note that this gap would not arise if the downstream corpus were preset, with both training and inference performed on it, as the model would learn $\Pr(\rd\mid q, C)$ directly.
We then analyze the \ac{kl} divergence as follows,\footnote{The reason we adopt the \ac{kl} divergence is that it is not only part of the training loss, i.e., empirical cross-entropy, but also related to the ranking performance.
Several publications have showed that the cross-entropy is a bound of several commonly used metrics, e.g., \ac{ndcg} and \ac{mrr}~\citep{bruchAnalysisCEforl2r2019,yang2024psl} for binary relevance score.
\ac{kl} divergence and cross-entropy only differ by the entropy.}
\begin{align}
       & {} \KL{\Pr(\cdot \mid C)}{\Pr(\cdot \mid C_1)} \\
    =\ & \E_{\rd_1\sim \Pr(\cdot \mid C)}\left[\log \frac{\Pr(\rd_1 \mid C)}{\Pr(\rd_1 \mid C_1)}\right] \\
    =\ & \E\left[\log \Pr(C_{>1} \mid \rd_1)\right]-\log \Pr(C_{>1} \mid C_1).
    \label{eq:kl_error}
\end{align}
In Eq.~\ref{eq:kl_error}, we see that the \ac{kl} divergence is the gap between the average proportion of constraints locally within each branch $\mc{D}^c_{\ge 1}$ and the global average constraints.
This indicates that the variation across branches may contribute the mismatch.
Recall that we construct our constraints via an i.i.d.\ sampling, and we have the expectation of the constrained marginal distribution $\Pr(d_1\mid q,C_1)$ as follows:
\begin{equation}
    \E_C[\Pr(d_1\mid q,C)]\propto \E[\ind[\rd\in\mc{D}^c]]\Pr(d_1\mid q) \propto \Pr(d_1\mid q).
\end{equation}
Therefore, the downstream corpus will follow the same distribution as the complete corpus on average.
However, as we will see in the next, the gap is not concentrated at zero with high probability in terms of \ac{kl} divergence.
% \todo{\ac{kl} div. is related to ranking.}

\header{Uniform relevance distribution}\label{sub:eg_uniform}
We first give an example case when the relevance distribution $\Pr(\rd\mid q)$ is a uniform distribution.
If the size of the downstream corpus is $k^r\ll k^m=|\mc{D}|$, we derive an asymptotic lower bound of the error for large $k$ as follows
\begin{equation}
    \KL{\Pr(\cdot \mid C)}{\Pr(\cdot \mid C_1)}\gtrsim \frac{0.05}{k^{r-1}}.
    \label{eq:kl_lower_uniform}
\end{equation}
In particular, we have a constant error $0.05$ if the size is $O(k)$.
For detailed illustration of the proof, see Theorem~\ref{the:kl_uniform}.
Here is the intuition:
\Ac{gr} model will predict a uniform distribution over the valid first tokens, but the ground-truth should be proportional to the number of valid documents in each valid branch.
Due to the variance of the sampling, the ground-truth distribution will not be exactly uniform.

\header{General relevance distribution}\label{sub:eg_arbitrary}
Following the same idea of the uniform case, we further give the result for general relevance distributions.
The key is the Simpson diversity index~\citep{simpson_diversity_index}, which is used for measuring the degree of concentration.
We introduce the average Simpson diversity index.
It is computed as the squared expectation of root sum of squared probabilities $\Pr(d\mid d_1)$, i.e.,
\begin{equation}
    \E_{d_1}^2\left[\sqrt{\sum_d \Pr(d\mid d_1)^2}\right].
\end{equation}
Recall that each document is selected with probability $p$, and let $A$ be the average Simpson diversity index, we have an asymptotic lower bound of the \ac{kl} divergence for large $k$ in Eq.~\ref{eq:kl_lower_non_uniform}.
\begin{equation}
    \KL{\Pr(\cdot \mid C)}{\Pr(\cdot \mid C_i)}\gtrsim \frac{0.05A}{p}.
    \label{eq:kl_lower_non_uniform}
\end{equation}
For detailed illustration of the proof, see Theorem~\ref{the:kl_arbitrary}.
We also show that the lower bound reaches its minimum in the uniform relevance distribution case. 
Note that we have $A\ge \frac{k}{|\mc{D}|}$, where the equality is obtained when $\Pr(\cdot)$ is uniform.
Let the selected corpus size be $k^r\approx p|\mc{D}|$, we have the same error shown in Eq.~\ref{eq:kl_lower_uniform}.
\begin{equation}
    \frac{0.05A}{p}\approx \frac{0.05k}{|\mc{D}|p}=\frac{0.05}{k^{r-1}}.
\end{equation}
Note that for concentrated distribution, the Simpson diversity index considerably exceeds $p$, resulting in a corresponding larger error.

In conclusion, we infer a lower bound of the \ac{kl} divergence between the predicted and ground-truth marginal distribution.
The bound is proportional to the degree of concentration of the underlying relevance distribution.

% =============================================================

\subsection{The impact of beam search on recall}\label{sec:recall}
Beam search often fails to retain the correct top-$k$ prefix candidates, resulting in the exclusion of relevant documents during generation~\citep{zengPlanningAheadGenerative2024}. 
We attribute this limitation to the usage of conditional decomposition of the joint distribution over the corpus.

To formalize this analysis, we model the decomposition as a tree where:
\begin{enumerate*}[label=(\roman*)]
    \item each node at layer $i$ represents a token $d_i$ generated under a specific prefix $d_{<i}$,
    \item each sub-tree represents possible continuations of a prefix, and
    \item the \emph{value} of node $d_i$ with prefix $d_{<i}$, denoted $V(d_i\mid d_{<i})$, equals $\sum_{d_{>i}}\Pr(d_{\ge i}\mid d_{<i})$, i.e., the marginal probability $\Pr(d_i\mid d_{<i})$.
\end{enumerate*}
The property of this structure is that node values represent marginal probabilities aggregated over all possible future paths.
However, the objective of the retrieval model requires identifying specific document with maximal joint probability $\Pr(d)$.
This creates a fundamental mismatch: nodes with high values $V(d_i\mid d_{<i})$ may not contain the hightest-joint-probability documents in their sub-trees.

\header{Non-relevant branches overtaking relevant ones}
\label{sub:overtaken}
We setup a scenario with a sparse relevance distribution to elucidate this issue.
We present how branches containing relevant documents can be overtaken by non-relevant peer branches during generation.
At the first generation step, the model have to choose within $k$ nodes, each associated with a sub-tree of $k^{m-1}$ documents (leaves). 
We assign a logit uniformly sampled from $[-1, 1]$ to each document in the corpus. 
A subset of $\lambda k \ll |\mc{D}|$ documents is randomly selected as relevant, and each is assigned a logit within $O(\log k \pm \log\log k)$. 
The exponential of each logit is the final relevance score of the document, i.e., the score of the corresponding root-to-leaf path.

We prove that the recall of the top-$\lambda k$ valued branches is upper-bounded by $0.5 + o(1)$ with high probability. 
This indicates that many relevant branches are excluded from the highest-valued branches. 
However, the top-$1$ branch is highly likely to contain the most relevant documents. 
A detailed formal statement and proof of this result is provided in Theorem~\ref{the:recall} in Appendix~\ref{app:recall}.
The thickness of the tail distribution or the sharpness of relevant branches determines the probability of overtaking.
If the scores for relevant branches are sufficiently high, this issue becomes less pronounced.

In summary, \ac{gr} models relying on the sum of sub-tree values for ranking branches struggle to achieve high recall performance while maintaining top-$1$ precision. 

\subsection{Concentration as a trade-off factor}
We find a common dependence on the concentration of relevance distribution from both components, i.e., the Simpson diversity index, and the thickness of the tail distribution.
However, the concentration takes effect in opposite directions.
Less concentrated distribution is more stable against sampling and thus have better alignment with the ground-truth constrained marginal distribution.
In contrast, high concentration reduces the false positive branches and thus improve the recall performance using beam search, but it requires an accurate marginal distribution.
\section{Difficulty of concentration in Bayes-optimal \ac{gr}}\label{sub:solution}

While increasing the degree of concentration is a promising approach to improve recall performance on the complete corpus---if the effects of constraints are ignored---achieving this in the context of Bayes-optimal \ac{gr} is still challenging. 
In this section, we show details about the challenge in the context of Bayes-optimal \ac{gr} in two directions:

\begin{enumerate}[leftmargin=*]
    \item \textbf{Aggregation:} Aggregating relevant documents within a narrow branch to ensure that its value significantly exceeds that of others throughout the generation process.
    \item \textbf{Amplification:} Amplifying the scores of relevant documents relative to non-relevant ones, so that the branch value serves as a reliable sketch for branch quality.
\end{enumerate}
While these strategies have shown their effectiveness~\citep{liTreeIndexDenseRetrieval2023,zengScalableEffectiveGenerative2023b} on small scale, in-domain datasets, they are much more challenging to implement in Bayes-optimal \ac{gr} models.
Specifically, as we will discuss, achieving this concentrated structure requires additional computational and data resources beyond those needed to obtain the Bayes-optimal model itself.
Detailed discussions are shown in the following.

\subsection{Aggregation introduces redundancy}
Although the corpus size $|\mc{D}|=k^m$ is sufficiently large, the model does not necessarily fully utilize the entire code space.
Here we study the entropy of the corpus distribution $\Pr(\cdot)$, which is defined as $\Ent(\rv{d}) = -\sum_{d\in\mc{D}} \Pr(d)\log\Pr(d)$.
It can be decomposed into the entropy of the marginal distribution at each step, i.e., $\Ent(\rv{d}) = \sum_{i=1}^m \Ent(\rv{d}_i\mid \rv{d}_{<i})$.
When the distribution is uniform, the entropy is maximized, i.e., $\Ent(\rv{d}) = \log |\mc{D}| = m\log k$.
If we introduce external prior knowledge to aggregate the relevant documents, the relevant branches will stand out at a very early stage of generation, and the entropy at that layer $\Ent(\rv{d}_i\mid \rv{d}_{<i})$ will be low and even approaching zero.
The more aggregated the relevant documents are, the lower the entropy becomes.
% Therefore, for the same code space, more aggregated structure will store less information.
In other words, for the same corpus size, the model will waste more code space on redundant structures.
In practice, this approach for concentration is often realized by conducting hierarchical clustering on the corpus, which has been shown to be effective in small scale retrieval tasks.
The trade-off between concentration and redundancy can be ignored to some extent when the corpus is small.
However, if we would like to build a Bayes-optimal \ac{gr} model, learning a sufficiently large code space is already expensive, see Appendix~\ref{sec:complexity}.
Effective concentration will introduce more redundancy, which is computationally inefficient.

\subsection{Amplification requires high-quality data}
We treat the amplification strategy as an approximation of the max-heap structure discussed in \citet{liTreeIndexDenseRetrieval2023}.
In this structure, the value of each node is the maximum value of its children instead of the sum, i.e., $V(d_i\mid d_{<i}) = \max_{d_{>i}} \Pr(d_i, d_{>i}\mid d_{<i})$.
One can prove that this structure can achieve perfect recall performance by preserving the relevant documents in the top-$n$ branches~\citep{zhuoOptimalTreeModels2020}.
Note that this structure is no longer a chain decomposition of the joint distribution, and the distribution at each step is different from the original distribution.
The new distribution can still be learned through empirical risk minimization by carefully filtering the training data.
Considering the $i$-th step, the negative log-likelihood loss is as follows:
\begin{equation}
    \mc{L}^i(\theta) = -\sum_{d\in\Tilde{\mc{D}}} \log\Pr(d_i\mid d_{<i};\theta),
\end{equation}
where $\Tilde{\mc{D}}$ is the training set, and $\theta$ is the model parameter to be optimized.
In order to learn the max-heap structure, the predicted distribution $\Pr(d_i\mid d_{<i};\theta)$ should be proportional to $V(d_i\mid d_{<i})$.
Therefore the loss function should filter out the non-maximum successors in the training set, i.e.,
\begin{equation}
\mbox{}\hspace*{-3mm}
    \mc{L}^i(\theta) \!=\! -\!\!\sum_{d\in\Tilde{\mc{D}}} \!\ind[d\in\arg\max \Pr(d_{> i}\mid d_{\le i})]\log\Pr(d_i\mid d_{<i};\theta).
\hspace*{-2mm}\mbox{}
\end{equation}
As we can see, only the most relevant successors are allowed to contribute to the loss function.
It not only needs to throw away large amounts of data but also requires more high-quality data and careful filtering strategies.
Most retrieval models can be seen to be trained on this loss~\citep{liTreeIndexDenseRetrieval2023}, as they are only trained on the labeled relevant documents.
However, the quality and quantity of retrieval data are often limited considering that the amount of data needed for a generative model is generally much larger than that for a discriminative model.

\section{Synthetic experiments}\label{sec:synthetic-exp}

In this section, we experimentally verify the theoretical results and investigate scenarios beyond the assumed data distributions presented in Section~\ref{sec:constraint} and Section~\ref{sec:recall}.

\subsection{Effects of constraints}

\begin{figure}[t]
    \centering
    \includegraphics[width=0.9\columnwidth]{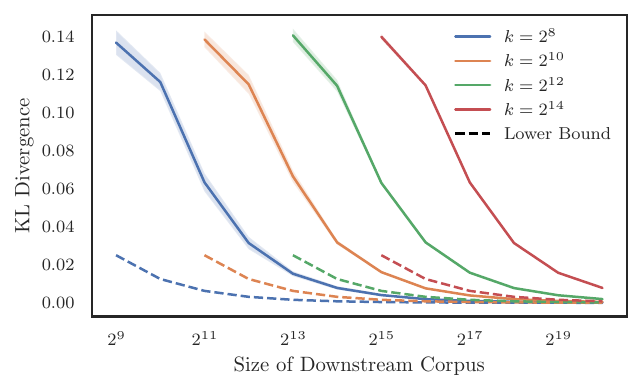}
    \caption{The KL divergence error in the first generation step on synthetic uniform relevance distribution data with uniformly sampled downstream corpus.}
    \label{fig:constraints_error_synthetic_uniform}
\end{figure}

\header{Uniform relevance distribution}  
We begin by simulating the case of a uniform relevance distribution, as discussed in Section~\ref{sub:eg_uniform}.  
Since the lower bound is expressed in terms of the size of the downstream corpus, we vary its size to observe the behavior across different vocabulary sizes.  
We choose a sufficiently large $m$ so that the complete corpus of size $k^m$ is much larger than the downstream corpus.  
For a downstream corpus of size $n$, the sampling probability is given by $p=\frac{n}{k^m}$.  
The simulation results are shown in Figure~\ref{fig:constraints_error_synthetic_uniform}.  

As illustrated, the \ac{kl} divergence for different $k$ values exhibits a consistent decreasing trend.  
For a fixed downstream corpus size, a smaller $k$ results in significantly lower error.  
This is expected, as a smaller $k$ leads to each branch covering a larger number of selected documents, thereby reducing variance.  
In this case, the model-predicted marginal distribution closely follows the actual one.

\begin{figure}[t]
    \centering
    \includegraphics[width=0.9\columnwidth]{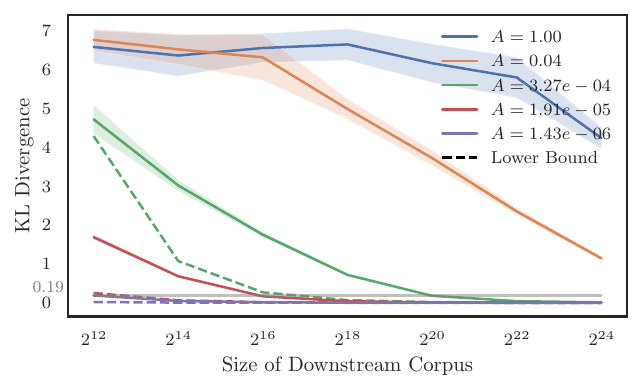}
    \caption{The KL divergence error in the first generation step on several synthetic relevance distribution data with different degrees of concentration.
    The vocabulary size is $2^{10}$, the \ac{docid} length is $3$.
    $A$ is the Simpson diversity index of the relevance distribution.
    As for highly concentrated distributions, e.g., $A=1$, and $A=0.04$, the Lyapunov's condition will no longer hold (see Theorem~\ref{the:kl_arbitrary} for more details), we do not draw their lower bounds.}
    \label{fig:constraints_error_synthetic_general}
\end{figure}
\vspace{-3pt}
\header{General relevance distribution}
Next, we simulate the case of general relevance distributions, as described in Section~\ref{sub:eg_arbitrary}.  
For simplicity, we ensure that each branch has the same Simpson diversity index.  
We assign random weights to documents within the range $[1, e^{100}]$, where the assignment probability decreases exponentially with larger weights.  
The concentration of the distribution is controlled by varying the rate of decrease: slower rates result in less concentrated distributions, corresponding to smaller Simpson diversity indices.  
When the Simpson diversity index approaches $1$, our lower bound no longer holds.  

The simulation results are shown in Figure~\ref{fig:constraints_error_synthetic_general}.  
Compared to the uniform distribution case, the \ac{kl} divergence is significantly larger and decreases more slowly as the downstream corpus size increases. 
For data with a Simpson diversity index approaching $1$, the \ac{kl} divergence reaches approximately $6$.
In contrast, the uniform case consistently maintains a very low \ac{kl} divergence.
For example, the lowest Simpson diversity index in the figure is around $1e-06$, which matches the magnitude of the uniform distribution, $\frac{1}{2^{20}} \approx 1e-06$, and the corresponding \ac{kl} divergence is approximately $0.19$, as seen in Figure~\ref{fig:constraints_error_synthetic_uniform}.

\subsection{Effects of beam search}

\begin{figure}[t]
    \centering
    \includegraphics[width=0.9\columnwidth]{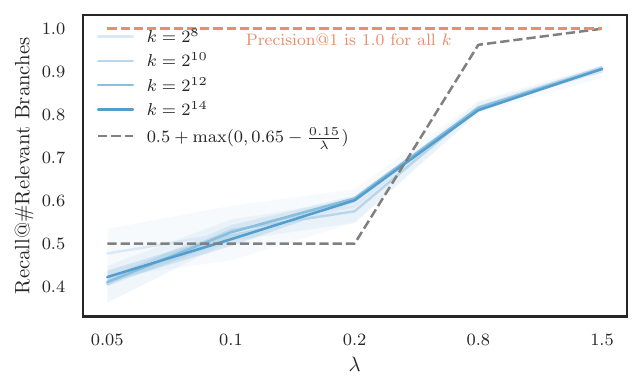}
    \caption{The recall of relevant branches cut off at the total number of relevant branches in the first generation step.
    The synthetic relevance distribution is constructed as Section~\ref{sub:overtaken}.
    The total number of sampled relevant document is $\lambda k$.
    The size of each branch is fixed as $n=2^{25}$.}
    \label{fig:recall_synthetic_lambda}
\end{figure}

\header{Verification of theoretical results}
We first construct a sparse relevance distribution as outlined in Section~\ref{sub:overtaken}.  
The results are presented in Figure~\ref{fig:recall_synthetic_lambda}.  
We evaluate the recall of relevant branches under varying values of $\lambda$, which controls the number of relevant documents sampled.  
Specifically, we compute how many relevant branches are preserved in the top positions during the first generation step.  
We also examine large $\lambda$ values, which are beyond the scope of Theorem~\ref{the:recall}.  

For small $\lambda$, the recall performance is approximately $0.5$.  
For larger $\lambda$ values, the recall still aligns with the theoretical bounds.  
In all cases, the $\text{precision}@1$ remains consistently perfect.  
Regarding score magnitudes, as each branch has a fixed size of $n=2^{25}$, relevant documents achieve scores around $e^{8.6} \approx 5400$, while non-relevant documents score around $e^1 \approx 2.7$.

\begin{figure}[t]
    \centering
    \includegraphics[width=0.9\columnwidth]{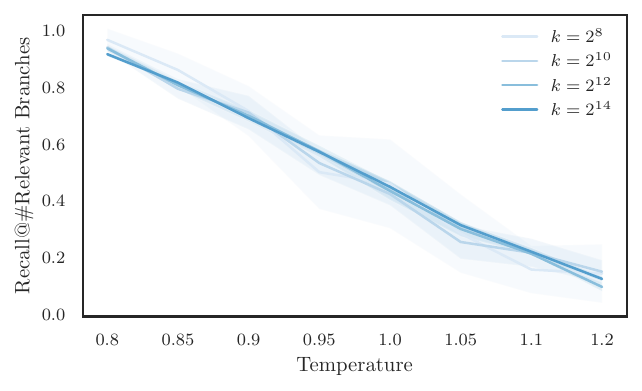}
    \caption{The recall of relevant branches cut off at the total number of relevant branches in the first generation step.
    The synthetic relevance distribution is constructed as Section~\ref{sub:overtaken}.
    The total number of sampled relevant document is $0.05 k$.
    The size of each branch is fixed as $n=2^{25}$.}
    \label{fig:recall_synthetic_temperature}
\end{figure}

\header{Effects on different degrees of concentration}
We also investigate the impact of varying the sharpness of the relevance distribution by introducing a temperature parameter $T$.  
For a document with logit $s$, the score is computed as $\exp(s/T)$.  
Lower temperatures increase the gap between scores for relevant and non-relevant documents.  
The results are shown in Figure~\ref{fig:recall_synthetic_temperature}.  

Our findings reveal that recall performance is highly sensitive to temperature, achieving perfection within a narrow range.  
This confirms the advantage of constructing \ac{gr} models that capture concentrated distributions effectively.

\subsection{Summary}
The results from the synthetic data distributions validate the theoretical findings presented in earlier sections.  
Although these synthetic settings are not practical for real-world scenarios, they provide a controlled environment to clearly demonstrate the negative effects of constrained decoding and beam search.

\vspace{-3mm}
\section{Experiments on real-world dataset}\label{sec:real-exp}

\subsection{Experimental setups}
\label{sub:setup}
\textbf{\Acl{docid} design.}
For the \ac{docid} design, we adopt the codebook and semantic ID mapping from~\citet{zengScalableEffectiveGenerative2023b}, which introduces the first effective \ac{gr} model that outperforms conventional retrieval models on the full MS MARCO passage corpus~\citep{bajaj2016ms}. 
The codebook size is set to $256$.

\header{Datasets}
We evaluate our approach using the MS MARCO V1 passage corpus~\citep{bajaj2016ms}, which contains $8.8$ million passages, along with three evaluation datasets:
\begin{enumerate*}[label=(\roman*)]
    \item MS MARCO-dev, consisting of $7$k queries;
    \item TREC DL 2019~\citep{craswell2020overviewtrec2019deep}, with $43$ queries; and
    \item TREC DL 2020~\citep{craswell2021overviewtrec2020deep}, with $54$ queries.
\end{enumerate*}

\begin{figure}[t]
  \centering
  \subfigure[Uniform sampling]{\includegraphics[height=0.49\linewidth]{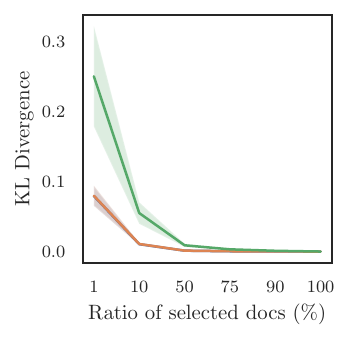}
  \label{fig:constraints_10k.uniform}}
  \hfill
  \subfigure[Non-uniform sampling]{\includegraphics[height=0.49\linewidth]{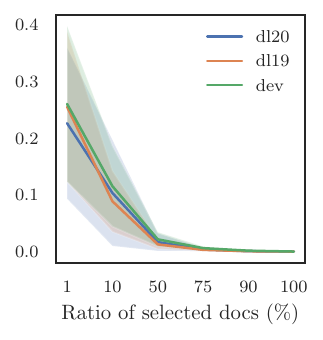}
  \label{fig:constraints_10k.non_uniform}}
  \caption{The KL divergence error in the first generation step on MS MARCO V1 passage corpus.
    A subset of top-$10,000$ rank list from SLIM++~\citep{liSLIM2023} is treated as the complete corpus, and the downstream corpus is (a) uniformly sampled, or (b) sampled with the relevance distribution.
    }
  \label{fig:constraints_10k}
\end{figure}

\header{Relevance distribution}
While recent advance improves \ac{gr} performance~\citep{zengScalableEffectiveGenerative2023b}, there remains a gap compared to state-of-the-art models. 
We use SLIM++~\citep{liSLIM2023} to compute relevance scores due to its strong overall performance on the MS MARCO V1 passage re-ranking leaderboard.\footnote{MS MARCO V1 Passage Regressions are available at \href{https://castorini.github.io/pyserini/2cr/msmarco-v1-passage.html}{Pyserini}.} 
We use the top-$10,000$ scores and extrapolate the remaining scores by taking $1\%$ of the lowest score from the ranked list, adding small Gaussian perturbations.
\begin{figure}[t]
  \centering
  \subfigure[Uniform sampling]{\includegraphics[height=0.48\linewidth]{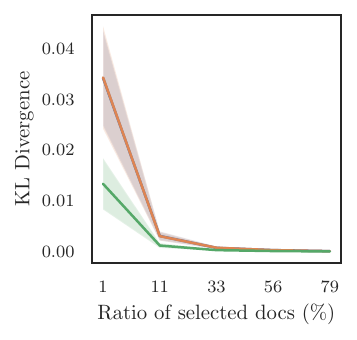}
  \label{fig:constraints_full.uniform}}
  \hfill
  \subfigure[Non-uniform sampling]{\includegraphics[height=0.48\linewidth]{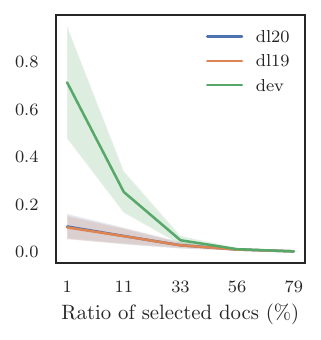}
  \label{fig:constraints_full.non_uniform}}
  \caption{The KL divergence error in the first generation step on MS MARCO V1 passage corpus.
    The relevance distribution is constructed as Section~\ref{sub:setup}.
    The downstream corpus is (a) uniformly sampled, or (b) sampled with the relevance distribution.
    }
  \label{fig:constraints_full}
\end{figure}

\vspace{-2mm}
\subsection{Effect of constraints}
We examine two sampling strategies: uniform sampling, as discussed in previous sections, and sampling directly based on the relevance distribution.
% \footnote{For sampling using the relevance distribution, we randomly select without replacement until the desired size is reached, implemented via \texttt{numpy.random.choice}.}
In practice, sampling should remain agnostic to future relevance, but this analysis illustrates what occurs if the downstream corpus distribution aligns with the complete corpus.

For uniform sampling, Figures~\ref{fig:constraints_10k.uniform} and~\ref{fig:constraints_full.uniform} depict the \ac{kl} divergence across varying complete corpus sizes. 
We observe that the error decreases rapidly as the size of the downstream corpus increases. 
For small $k$, the Simpson diversity index in each branch is lower due to the higher number of elements, which helps reduce the \ac{kl} divergence as defined in Eq.~\ref{eq:kl_lower_non_uniform}. 

In contrast, for non-uniform sampling, Figures~\ref{fig:constraints_10k.non_uniform} and~\ref{fig:constraints_full.non_uniform} show that the error is substantially higher than in the uniform case. 
Non-uniform sampling introduces greater variance and exacerbates the mismatch between distributions.

\subsection{Effect of beam search}
We evaluate recall performance by examining the retrieval of relevant branches. 
The cutoff is set to the number of relevant documents, which does not exceed the number of relevant branches. 
Results are presented in Table~\ref{table:recall}. 
The recall is relatively low, indicating that branches containing highly relevant documents may not always be retrieved within the top-$k$ branches. 
However, the top-$1$ precision is higher, particularly for the MS MARCO-dev evaluation set.

% mean_recall': np.float64(0.7748458041608566), 'mean_precision': np.float64(0.9074074074074074) @100, dl20
% Mean Recall: 0.6746265321476796, Mean Precision: 0.9053497942386831 @50, dev
% Mean Recall: 0.5372275001930796, Mean Precision: 0.6976744186046512, @50, dl19
% cutoff: 50, tag: dl20, Mean Recall: 0.6331761843257662, Mean Precision: 0.7592592592592593

\begin{table}[ht]
\caption{Recall@50 and Precision@1 of relevant branches ($256$ in total) at the first generation step, which are denoted as ``R@50'' and ``P@1''.
The highest $50$ documents are labeled relevant, and the branches contain these documents are labeled relevant.
The relevance distribution is constructed in Section~\ref{sub:setup}.
}
\label{table:recall}
\begin{tabular}{l cc}
    \toprule
    \textbf{Model} & R@50 & P@1 \\
    \midrule
    TREC DL 19 & 53.7 & 69.8 \\
    TREC DL 20 & 63.3 & 75.9 \\
    MS MARCO-dev & 67.5 & 90.5 \\
    \bottomrule
\end{tabular}
\end{table}

\subsection{Summary}
Our experimental results demonstrate that the theoretical findings reveal the negative effects of constrained beam search under real-world dataset distributions to some extent. 
Since \citet{zengScalableEffectiveGenerative2023b} designed and trained the \ac{gr} model specifically on the MS MARCO passage corpus, the \ac{docid} structure is highly adapted to that query-document training distribution. 
Additionally, the size of the MS MARCO corpus is far from ideal for representing a complete corpus. 
As a result, the experimental findings only provide an approximate reflection of the theoretical results.\footnote{The source code is available at \href{https://github.com/Furyton/constrained-generation-in-gr}{ https://github.com/Furyton/constrained-generation-in-gr}.}
\section{Limitations}\label{sec:limitation}
Concerning the theoretical aspects of our work, as we do not have an evidence of what an optimal \ac{gr} model should look like, we fail to provide practical assumptions on the relevance distribution and the structure of \ac{docid}.
Our results are sensitive to the parameters and assumptions, e.g., the sharpness of the relevance distribution for Theorem~\ref{the:recall}, and may not accurately reflect practical real-world situations.
Besides, we have not studied how the two factors affect each other when using constrained beam search.

Concerning the experimental aspects of our work, we only use the \ac{docid} from \citet{zengScalableEffectiveGenerative2023b} and the MS MARCO V1 passage corpus.

\vspace{-3mm}
\section{Conclusion}\label{sec:conclusion}
In this paper, we have provided theoretical results on the effect of constrained beam search for a Bayes optimal \ac{gr} model.
We have considered two separate aspects, constraints and beam search, and examine the root cause of the negative effect on generalization.
Both aspects are intrinsically connected to the degree of concentration of the relevance distribution across the complete corpus.
When it is more concentrated, the model achieves decent recall performance, provided the marginal distribution aligns closely with the actual one. 
However, applying downstream corpus constraints increases this marginal distribution gap at the same time.
To validate our theoretical findings, we have conducted experiments on synthetic and real-world dataset and have shown the case beyond the assumed conditions in the theorem.
% Our results not only predict the errors in synthetic environment but also generalize to real-world complex distributions.
Overall, we give a systematical investigation from both theory and experiments to the limitation of constrained decoding on retrieval generalization.

Based on our findings, practitioners in the field may consider balancing the degree of concentration when designing and training \ac{gr} model, and using post-calibration to fix the errors when using the model on a downstream corpus.
Other forms of decoding strategies beyond constrained beam search are also suggested.

As to future work, we will continue to study how these results can be used for analyzing training properties of corpus-specific \ac{gr} models.
Incorporating learnable decoding strategies during the training of a differentiable search index may also be of interest in this field.

\begin{acks}
    Shiguang Wu gratefully acknowledges Yixiao Yu for insightful discussions regarding Lemma~\ref{lem:kl_uniform}, which were pivotal in establishing the first main result of this work.
    This research was (partially) funded by
    the Natural Science Foundation of China (62472261, 62372275, 62272274, 62202271, T2293773),
    the National Key R\&D Program of China with grant No. 2024YFC\-3307300 and No. 2022YFC\-3303004,
    the Provincial Key R\&D Program of Shandong Province with grant No. 2024CXGC010108, 
    the Natural Science Foundation of Shandong Province with grant No. ZR2024QF203, 
    the Technology Innovation Guidance Program of Shandong Province with grant No. YDZX2024088,
    the Hybrid Intelligence Center, a 10-year program funded by the Dutch Ministry of Education, Culture and Science through the Netherlands Organisation for Scientific Research, \url{https://hybrid-intelligence-centre.nl}, 
    project LESSEN with project number NWA.1389.20.183 of the research program NWA ORC 2020/21, which is (partly) financed by the Dutch Research Council (NWO), 
    project ROBUST with project number KICH3.LTP.\-20.006, which is (partly) financed by the Dutch Research Council (NWO), DPG Media, RTL, and the Dutch Ministry of Economic Affairs and Climate Policy (EZK) under the program LTP KIC 2020-2023,
    and
    the FINDHR (Fairness and Intersectional Non-Discrimi\-nation in Human Recommendation) project that received funding from the European Union’s Horizon Europe research and innovation program under grant agreement No 101070212.
    All content represents the opinion of the authors, which is not necessarily shared or endorsed by their respective employers and/or sponsors.
\end{acks}

\appendix
\section*{Appendix}
\section{On the magnitude of vocabulary size}\label{sec:complexity}
In this work, we use a sequence of tokens as the \ac{docid} to represent the document as a general case.
In order for the \ac{gr} model to generate the \ac{docid} of relevant documents, the \ac{docid} faithfully representing the semantic information is highly perfered.
Therefore, the vocabulary size of the \ac{docid} is a key factor to determine the capacity of the \ac{gr} model.
In this section, we will give a brief perspective of its magnitude by considering the average information content in real-world documents.

\Ac{bpb} is a widely used metric to measure the information content of a document.
It is defined as the number of bits required to encode the content in a lossless way.
Let $\alpha$ be the \ac{bpb} of general English text, and $n$ be the length (in bytes) of a document.
The information content of the document is $n \alpha$ bits on average.
Since the vocabulary size of the \ac{gr} model is $k$, the maximum average information in a single token is $\log_2 k$ bits.
Each \ac{docid} has $m$ tokens, so there will be $m \log_2 k$ bits in total.
One would expect that the information content of the \ac{docid} should be larger than the one of the document, i.e., $m \log_2 k \ge n \alpha$, which implies $k \ge 2^{\frac{n \alpha}{m}}$.
From \citet{openai2024gpt4technicalreport} and several publications~\citep{borgeaud22languagemodels,li2024evaluatinglargelanguagemodels}, the \ac{bpb} is usually around $1$.
For a document of length $512$ bytes, and our \ac{docid} length $m$ is $32$, the vocabulary size $k$ should be about $2^{16\alpha}\approx 65,536$.
This is about the same size of the vocabulary in a language model.
The size of the complete corpus is astronomically high, and for a regular size downstream corpus, the sampling probability is approaching zero, and will raise large \ac{kl} divergence according to Section~\ref{sec:constraint}.
This is similar to the case where we want to use the language model as \ac{gr} model and some textual content as the \ac{docid}.
\section{Lower bound of \ac{kl} divergence}\label{app:main_result}

\begin{lemma}[Lower bound of \ac{kl} divergence between Binomial and uniform distribution]\label{lem:kl_uniform}
    Let $\rv{S}_i\sim \ms{Binomial}(m,p)$, where $i\in[k]$, and $Z=\sum_{i=1}^k S_i$.
    We define a normalized distribution $P$ as
    \begin{equation}
        P_i\coloneqq\frac{S_i}{Z}, i\in[k],
    \end{equation}
    and a uniform distribution $Q$ on $\ms{supp}(P)\coloneqq \{i|S_i>0\}$ as
    \begin{equation}
        Q_i\coloneqq
        \begin{cases}
            \frac{1}{|\ms{supp}(P)|}, & \text{if } S_i>0 \\
            0, & \text{otherwise}
        \end{cases}
    \end{equation}    
    Then, we have a lower bound of the \ac{kl} divergence between $P$ and $Q$ for large $k$,
    \begin{equation}
        \KL{P}{Q}=\sum_i P_i \ln \frac{P_i}{Q_i}\gtrsim \frac{0.05}{mp}.
        \label{eq:kl}
    \end{equation}
\end{lemma}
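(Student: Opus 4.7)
The plan is to reduce the random KL divergence to a deterministic limit via the strong law of large numbers, then bound that limit using a Bregman-divergence identity for the convex function $\phi(x) = x\ln x$.

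First, since $Q_i = 1/N$ on the support with $N := |\ms{supp}(P)|$, a direct algebraic manipulation gives
\begin{equation}
    \KL{P}{Q} = \ln N - \Ent(P) = \ln\frac{N}{Z} + \frac{1}{Z}\sum_{i=1}^{k} S_i \ln S_i,
\end{equation}
with the convention $0\ln 0 = 0$ for empty bins. Applying the strong law of large numbers to the i.i.d.\ bounded variables $\rv{S}_i \sim \ms{Binomial}(m,p)$, as $k \to \infty$ we have $Z/k \to mp$, $N/k \to q := 1 - (1-p)^m$, and $k^{-1}\sum_i S_i \ln S_i \to \E[S \ln S]$, so $\KL{P}{Q}$ converges almost surely to
\begin{equation}
    L(m,p) := \ln\frac{q}{mp} + \frac{\E[S \ln S]}{mp}.
\end{equation}

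The main analytic step is to show $L(m,p) \gtrsim 0.05/(mp)$. Writing $\mu := mp$ and invoking the Bregman identity $\E[B_\phi(S,\mu)] = \E[S \ln S] - \mu \ln \mu$ for $\phi(x) = x \ln x$, the limit rearranges as $L(m,p) = \ln q + \E[B_\phi(S,\mu)]/\mu$. Since $\phi''(x) = 1/x$, a second-order Taylor expansion of $B_\phi$ around $\mu$ yields $B_\phi(S,\mu) \approx (S-\mu)^2/(2\mu)$, giving $\E[B_\phi(S,\mu)] \approx \Var(S)/(2\mu) = (1-p)/2$. In the regime where $(1-p)^m \to 0$ we also have $\ln q \to 0$, so $L(m,p) \to (1-p)/(2mp)$, and the target bound $0.05/(mp)$ holds whenever $p \le 0.9$---a mild condition well satisfied in the applications of Section~\ref{sec:constraint} where $p$ is vanishingly small.

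The hardest part will be controlling the Taylor remainder of $B_\phi$ uniformly, since $\phi''(x) = 1/x$ blows up near zero and the quadratic approximation is loose when $S$ is much smaller than $\mu$. A natural route is to split $\E[B_\phi(S,\mu)]$ into the bulk region where $|S - \mu| \le \mu/2$, on which the quadratic approximation is tight and contributes the dominant $(1-p)/2$ term, and a tail region, where sub-Gaussian concentration for the Binomial shows the contribution is of lower order. Extracting the explicit constant $0.05$ then reduces to tracking the first-order correction to the Gaussian approximation and verifying the inequality uniformly over the relevant $(m, p)$ range.
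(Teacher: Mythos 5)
Your approach is genuinely different from the paper's, and if the remaining technical details are filled in it would give a \emph{sharper} result. The paper's proof goes via total variation and Pinsker: it constructs the explicit event $A_n=\{i:S_i\ge mp+\sqrt{mpq}\}$, estimates $|P(A_n)-Q(A_n)|\gtrsim (1-\Phi(1))\sqrt{q/(mp)}$ using the normal approximation to $\ms{Binomial}(m,p)$ plus Chernoff concentration of $Z$, $Y_n$, $Y_1$, and then applies Pinsker to get $\KL{P}{Q}\ge 2d_{\ms{TV}}^2$, arriving at the constant $0.05\approx 2(1-\Phi(1))^2$. You instead use the exact identity $\KL{P}{Q}=\ln(N/Z)+Z^{-1}\sum_i S_i\ln S_i$, let the SLLN give the almost-sure deterministic limit $L(m,p)=\ln q+\E[B_\phi(S,mp)]/(mp)$, and expand the Bregman term. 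This cleanly separates the large-$k$ limit (pure SLLN; the paper's invocation of De Moivre--Laplace ``as $k$ grows'' is really a large-$m$ statement and conflates the two parameters) from the analytic part in $m,p$. Your asymptotic $L(m,p)\sim (1-p)/(2mp)$ is about a factor $1/(4(1-\Phi(1))^2)\approx 10$ tighter than the paper's Pinsker-based constant in the small-$p$ regime, and it also gives an upper bound, so it characterizes the limit exactly rather than merely bounding it. Both proofs share the same implicit regime: you make explicit the conditions $p\le 0.9$ and $(1-p)^m\to 0$ (equivalently $mp\to\infty$), which the paper needs just as much for its normal approximation to be valid and which do hold in the application of Theorem~\ref{the:kl_uniform} where $mp=k^s$. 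The one loose end you flag yourself---controlling the Taylor remainder of $B_\phi$ uniformly near $S=0$---is genuine, but your bulk-plus-tail plan is the right one: on $|S-\mu|\le\mu/2$ the quadratic term dominates and the Binomial third and fourth moments contribute only $O(1/(mp))$ corrections, while the tail $S<\mu/2$ has mass $e^{-\Omega(mp)}$ and the integrand $B_\phi(S,\mu)\le\mu$ is bounded there. I would accept this as a valid alternative proof once that remainder estimate is written out.
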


\begin{proof}
    By the De Moivre–Laplace theorem~\citep{LyapunovCLT}, as $k$ grows large, for $n$ in the neighborhood of $mp$, we have
    \begin{equation}
        \rv{S}_i \sim \mc{N}(mp, mpq), \, p+q=1.
    \end{equation}
    Therefore, let $n=mp+\sqrt{mpq}$, we have
    \begin{equation}
        \Pr[S_i\ge n] \simeq 1 - \Phi(1). % \approx 0.16.
    \end{equation}    
    Next, we focus on $\rv{S}_i$ that deviates from the mean.
    Let $A_n\coloneqq \{S_i\ge n\}$ be the deviated elements, and $Y_n=|A_n|$, we have
    \begin{align}
        P(A_n)
        & {} = \frac{\sum_i S_i\ind[S_i\ge n]}{Z}
        \ge \frac{nY_n}{Z}, \text{and } \\
        Q(A_n)
        & {} =\frac{\sum_i \ind[S_i\ge n]}{\sum_i \ind[S_i\ge 1]}
        =\frac{Y_n}{Y_1}.
    \end{align}
    Since 
    \begin{align}
        \E[Z]\ & = kmp,\\
        \E[Y_n] & = k\Pr[S_i\ge n] = O(k), \\
        \E[Y_1] & = k(1-\Pr[S_i=0])\simeq k\left(1-e^{-k^s}\right),
    \end{align}
    by the multiplicative Chernoff bound, with probability at least $1-\exp(-\delta^2\Omega(k))$, % $1-6\exp(-\delta^2\Omega(k))$,
    $|Y_n-\E[Y_n]|\le \delta\ \E[Y_n]$,
    $|Y_1-\E[Y_1]|\le \delta\ \E[Y_1]$, and
    $|Z-mpq|\le \delta\ mpq$,
    we set $\delta=o(1)$ and $\delta^2k=k^{O(1)}$, and have
    \begin{align}
        d_\ms{TV}(P,Q)
        & {} \ge |P(A_n)-Q(A_n)| \\
        & {} \ge \frac{nY_n}{Z}-\frac{Y_n}{Y_1} \\
        & {} \ge \frac{(1-\delta)n\E[Y_n]}{(1+\delta)kmp}-\frac{(1+\delta)\E[Y_n]}{(1-\delta)\E[Y_1]} \\
        & {} \simeq \frac{1-\Phi(1)}{\sqrt{mp}}.
    \end{align}
    where $d_\ms{TV}(P,Q)=\sup_{A\subseteq [k]}|P(A)-Q(A)|$ is the total variation distance of two distributions.
    
    Lastly, we use Pinsker's inequality~\citep{pinskerinequality} to give the asymptotic lower bound in Eq.~\ref{eq:kl},
    \begin{align}
        \KL{P}{Q}
        & \ge 2 d_\ms{TV}(P,Q)^2\gtrsim \frac{0.05}{mp}.\qedhere
    \end{align}
\end{proof}

\begin{theorem}[Lower bound of \ac{kl} divergence for uniform relevance distribution.]\label{the:kl_uniform}
    Let $r=m-1$ and the sampling probability $p=\frac{1}{k^{r-s}}$.
    $\Pr(\cdot)$ is a uniform distribution.
    We have, for the \ac{kl} divergence in Eq.~\ref{eq:kl_error},
    \begin{equation}
        \KL{\Pr(\cdot|C)}{\Pr(\cdot|C_i)}\gtrsim \frac{0.05}{k^{s}}.
    \end{equation}
    Here $s$ is small and hence the right hand side converges slowly with respect to $k$.
    When $s=0$, we have a constant lower bound $0.05$.
\end{theorem}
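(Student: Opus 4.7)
The plan is to reduce the theorem directly to Lemma~\ref{lem:kl_uniform} by identifying the random per-branch sample counts induced by i.i.d.\ corpus sampling with the Binomial variables appearing in that lemma.

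First, I would simplify both marginals using uniformity of $\Pr(\cdot)$. Since every document has equal probability, the unconditioned marginal $\Pr(\rd_1\mid q)$ is uniform on $[k]$, so by Eq.~\ref{eq:gr_marginal} the predicted marginal $\Pr(\rd_1\mid q, C_1)$ is the uniform distribution over the first-token support of $\mc{D}^c$, namely $\{d_1 : \exists\, d_{>1}\ \text{with}\ (d_1,d_{>1})\in\mc{D}^c\}$. By Eq.~\ref{eq:real_marginal}, the true marginal $\Pr(\rd_1\mid q, C)$ is proportional to the number of sampled documents in each branch.

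Second, I would make the Binomial correspondence explicit. For each $d_1\in[k]$, define the branch count
\begin{equation}
\rv{S}_{d_1} := \bigl|\{d_{>1}\in[k]^r : (d_1,d_{>1})\in\mc{D}^c\}\bigr|.
\end{equation}
Since the $k^r$ documents in branch $d_1$ are independently retained with probability $p$, we have $\rv{S}_{d_1}\sim \ms{Binomial}(k^r, p)$, and the $k$ variables are mutually independent across branches. Setting $P_{d_1}=\rv{S}_{d_1}/\sum_i \rv{S}_i$ and $Q_{d_1}$ uniform on $\{i:\rv{S}_i>0\}$ recovers exactly the two marginals from the previous paragraph, which is precisely the setup of Lemma~\ref{lem:kl_uniform} after replacing its $m$ by $k^r$ (its $k$ coincides with our $k$).

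Finally, I would invoke the lemma with the substitution $m\mapsto k^r$ and $p\mapsto k^{-(r-s)}$, so that the product $mp$ appearing in the lemma's bound becomes $k^r\cdot k^{-(r-s)}=k^s$, yielding
\begin{equation}
\KL{\Pr(\cdot\mid C)}{\Pr(\cdot\mid C_1)} = \KL{P}{Q}\gtrsim \frac{0.05}{k^r p}=\frac{0.05}{k^s}.
\end{equation}
The substantive technical work---De Moivre--Laplace for the Binomial tail, multiplicative Chernoff for concentration of $Z$, $Y_n$ and $Y_1$, and Pinsker's inequality converting total variation to \ac{kl}---is already carried out inside the lemma. The only obstacle I foresee is a sanity check that the asymptotic regime of the lemma applies: we need the lemma's bin count (our $k$) and trial count (our $k^r$) both to grow, which holds as $k\to\infty$ for any $r\ge 1$, and we need $mp=k^s$ to stay at least a constant so that the normal approximation is usable---this is exactly the regime $s\ge 0$ assumed in the statement, with the borderline $s=0$ case corresponding to the stated constant lower bound $0.05$.
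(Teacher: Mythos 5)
Your proposal matches the paper's proof in both approach and conclusion: both identify the per-branch sample counts as $\ms{Binomial}(k^r, p)$ random variables, note that uniformity of $\Pr(\cdot)$ makes the model-predicted marginal uniform over the first-token support, and reduce directly to Lemma~\ref{lem:kl_uniform} via the substitution that sends the lemma's trial count $m$ to $k^r$ so that $mp = k^s$. Your write-up simply makes explicit the parameter matching and the regime sanity check that the paper's terse four-line proof leaves implicit.
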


\begin{proof}
    We consider each possible token $d_1\in[k]$ at first position.
    Since the selection of each document follows $\ms{Bernoulli}(p)$, the number of documents selected with first token being $d_1$ follows $\ms{Binomial}(k^r,p)$.
    As the \ac{gr} model can only consider the constraint in the current step, it will return a uniform distribution over the valid $d_1$ tokens.
    By revoking Lemma~\ref{lem:kl_uniform}, we have the lower bound.
\end{proof}

\begin{theorem}[Lower bound of \ac{kl} divergence for general relevance distribution]\label{the:kl_arbitrary}
    Let $S_{ij}$ be a weighted Bernoulli random variable, with parameter $p$ and weight $w_{ij}$, where $i\in[k]$ and $j\in[n]$.
    Suppose for some $\delta>0$, $\{S_{ij}\mid j\in[n]\}$ satisfy Lyapunov's condition~\citep{LyapunovCLT}, i.e.,
    \begin{equation}
        \lim_{n\to \infty} \frac{pq^{2+2\delta}+p^{2+2\delta}q}{\left(\sum_jw_{ij}^2pq\right)^{1+\delta}} \sum_jw_{ij}^{2+2\delta}=0.
    \end{equation}
    
    We define $P$ and $Q$ similar to Theorem~\ref{the:kl_uniform}, as
    \begin{equation}
        P[i]=\frac{S_i}{Z}, \quad Q[i]=\frac{w_i}{W},
    \end{equation}
    where $Z=\sum_{i=1}^k S_i$ and $W=\sum_{i=1}^k w_i$.
    We have a lower bound of the \ac{kl} divergence between $P$ and $Q$ for large $k$,
    \begin{equation}
        \KL{P}{Q}\gtrsim \frac{0.05\E^2[A_i]}{p},
    \end{equation}
    where $A_i^2=\sum_j w_{ij}^2/w_i^2$.
\end{theorem}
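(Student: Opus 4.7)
The plan is to mirror Lemma~\ref{lem:kl_uniform}: establish a Gaussian approximation for each $S_i$ via Lyapunov's condition, isolate a standardized deviation event whose $P$- and $Q$-masses disagree quantitatively, bound $d_\ms{TV}(P,Q)$ from below by that gap, and finish with Pinsker's inequality.

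Concretely, Lyapunov's condition makes $S_i=\sum_j S_{ij}$ asymptotically normal with mean $\mu_i = pw_i$ and variance $\sigma_i^2 = pq\sum_j w_{ij}^2 = pq\, w_i^2 A_i^2$. I would pick the standardized threshold $t_i = \mu_i+\sigma_i = pw_i+\sqrt{pq}\,w_i A_i$, so that each event $\mathcal{I}_i=\{S_i\ge t_i\}$ occurs with probability $1-\Phi(1)+o(1)$, uniformly in $i$. Writing $\mathcal{I} = \{i : \mathcal{I}_i \text{ holds}\}$, I use the coarse deterministic lower bound $P(\mathcal{I}) \ge Z^{-1}\sum_{i\in \mathcal{I}} t_i$ together with the identity $Q(\mathcal{I}) = W^{-1}\sum_{i\in\mathcal{I}} w_i$. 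Since $\E[Z]=pW$, $\E[\sum_{i\in\mathcal{I}} w_i]\simeq (1-\Phi(1))W$, and $\E[\sum_{i\in\mathcal{I}} w_i A_i]\simeq (1-\Phi(1))\sum_i w_i A_i = (1-\Phi(1))\,W\,\E[A_i]$ (where $\E[A_i]$ denotes the $w_i/W$-weighted mean, matching the definition in Section~\ref{sub:eg_arbitrary}), and since the indicators $\ind[\mathcal{I}_i]$ are independent across $i$, the same multiplicative Chernoff step as in Lemma~\ref{lem:kl_uniform} promotes these expectations to high-probability statements. Combining them gives
\[
P(\mathcal{I})-Q(\mathcal{I}) \gtrsim (1-\Phi(1))\sqrt{q/p}\,\E[A_i],
\]
and taking $p\to 0$ so that $\sqrt{q/p}\to 1/\sqrt{p}$, Pinsker's inequality delivers $\KL{P}{Q}\ge 2 d_\ms{TV}(P,Q)^2 \gtrsim 2(1-\Phi(1))^2\E^2[A_i]/p \approx 0.05\,\E^2[A_i]/p$. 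In the uniform case $w_{ij}\equiv 1$, $A_i=1/\sqrt{n}$, and the bound collapses to Lemma~\ref{lem:kl_uniform}.

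The main obstacle, and the reason Lyapunov's condition is imposed, is that the CLT must hold uniformly in $i$ with error small enough that $|\mathcal{I}|$ concentrates tightly around $k(1-\Phi(1))$; non-uniform convergence would corrupt the leading constant. A secondary difficulty is the weighted sum $\sum_{i\in\mathcal{I}} w_i A_i$, whose summands combine the random indicators $\ind[\mathcal{I}_i]$ with heterogeneous deterministic weights $w_i A_i$; the uniform case of Lemma~\ref{lem:kl_uniform} needed only a single Chernoff bound on a Binomial, whereas here I would invoke a Bernstein-type bound for weighted independent Bernoullis. Finally, I need to check that the cumulative $o(1)$ errors (CLT uniformity, Chernoff slack, and the $\sqrt{q/p}\to 1/\sqrt{p}$ approximation) do not erode the leading constant $2(1-\Phi(1))^2$ before the final Pinsker step.
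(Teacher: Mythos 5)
Your proposal is correct and mirrors the paper's own proof almost step for step: the paper likewise applies Lyapunov's CLT to each $S_i$ to approximate $S_i\sim\mc{N}(pw_i,\,pq\,w_i^2A_i^2)$, takes the one-standard-deviation event $I=\{i:S_i\ge\mu_i+\sigma_i\}$, lower-bounds $d_{\ms{TV}}(P,Q)$ via $P(I)-Q(I)\gtrsim\frac{0.16}{\sqrt{p}}\sum_i w_iA_i/W$ using a Hoeffding-style concentration bound (whose exponent carries $B=\sum_i w_i^2$, playing the same role as the Bernstein bound you flag for weighted Bernoullis), and finishes with Pinsker. Your extra attention to CLT uniformity across $i$ and to identifying $\E[A_i]$ as the $w_i/W$-weighted average is more explicit than the paper's terse write-up, but the underlying argument is the same.
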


\begin{proof}
    Here we use the Lyapunov central limit theorem~\citep{LyapunovCLT}, to approximate the distribution of $W_i'$.
    We have
    \begin{equation}
        S_i\sim \mc{N}\bigl( w_i p, pqw_i^2A_i^2 \bigr) =\mc{N}(\mu_i, \sigma_i^2), \, p+q=1.
    \end{equation}

\noindent%
    As we have done in Theorem~\ref{the:kl_uniform}, we choose a subset of $[k]$ to compute a lower bound of total variation distance.
    Let $I=\{i \mid S_i \ge \mu_i+\sigma_i\}$, $\delta=o(1)$ and $k\delta^2=k^{O(1)}$, we have, with probability at least $1-3\exp\bigl(\frac{-2\delta^2k^2}{B^2}\bigr)$, where $B=\sum_{i\in[k]} w_i^2$,
    \begin{align}
        d_\ms{TV}(P,Q) & \ge |P(I)-Q(I)| \\
        & \ge \frac{\sum_{i\in I} S_i}{Z}-\frac{\sum_{i\in I} w_i}{W} \\
        & \gtrsim \frac{0.16}{\sqrt{p}}\,\frac{\sum_{i\in[k]} w_iA_i}{\sum_{i\in[k]w_i}}.
    \end{align}
    If we set $W=1$, it can be simplified as $\frac{0.16\E[A_i]}{\sqrt{p}}$.
\end{proof}
\section{Recall performance using beam search}\label{app:recall}

\begin{theorem}[Top-$\lambda k$ recall and top-$1$ precision of data model in Section~\ref{sub:overtaken}]\label{the:recall}
Suppose we have $k$ branches, each with $n=k^{m-1}$ documents (leaves).
We randomly select $\lambda k$ docs from all branches as the relevant ones, where $\lambda \ll 1$.
We assign each non-relevant doc a score uniformly at random from $[-1, 1]$, and each relevant doc from $[\delta-\Delta, \delta+\Delta]$, where $\delta=0.5 \log (0.8 n)$, and $\Delta=O(0.5 \log \log k)$.
We then take the exponential of the score for each doc and use the sum of the scores as the ranking score for each branch.
Then, we have the following results with high probability:
\begin{enumerate}
    \item The top-$\lambda k$ recall is lower than $0.5 + \max\{0.65 - 0.15/\lambda, 0\}$.
    \item The top-$1$ precision is $1$.
\end{enumerate}

\end{theorem}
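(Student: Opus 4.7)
The plan is to analyze both claims via Gaussian approximations to the branch scores. Write each branch's score as $B_i = X_i + Y_i$, where $X_i$ is the sum over non-relevant docs (each contributing $e^S$ with $S \sim U(-1,1)$), and $Y_i$ is the sum over relevant docs (each contributing $e^U$ with $U \sim U(\delta - \Delta, \delta + \Delta)$). Applying the central limit theorem to the $n \approx k^{m-1}$ non-relevant summands gives $X_i$ approximately $\mc{N}(\mu, \sigma^2)$ with $\mu = n\sinh(1)$ and $\sigma^2 = n(\sinh(2)/2 - \sinh^2(1))$. Using a Poisson occupancy approximation, the number $R$ of relevant branches concentrates at $k(1 - e^{-\lambda})$, and for $\lambda \ll 1$ almost every relevant branch contains exactly one relevant doc, so $Y_i \approx e^{\delta + U'_i}$ with $U'_i \sim U(-\Delta, \Delta)$. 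The key quantity driving both claims is the boost-to-noise ratio $\rho := e^\delta/\sigma = \sqrt{0.8/(\sinh(2)/2 - \sinh^2(1))} \approx 1.36$.

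For the top-$1$ precision claim, I would compare the two competing maxima. Extreme-value asymptotics for approximately Gaussian iid variables yield that the maximum of $X_i$ over non-relevant branches equals $\mu + \sigma\sqrt{2\log k}(1 + o(1))$ with high probability. On the relevant side, the maximum of $\lambda k$ iid copies of $e^{U'}$ concentrates near $e^{\Delta}$, so the maximum of $B_i$ over relevant branches is at least $\mu + \sigma \rho \, e^{\Delta}(1 - o(1))$ up to $O(\sigma)$ lower-order fluctuation. Choosing $\Delta$ at the upper end of $O(0.5 \log \log k)$ ensures $\rho \, e^{\Delta}$ exceeds $\sqrt{2\log k}$ by a constant factor, so the relevant max strictly dominates the non-relevant max with high probability, yielding top-$1$ precision equal to $1$.

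For the top-$\lambda k$ recall, let $T$ denote the $(\lambda k)$-th largest branch score and $t = (T - \mu)/\sigma$. Equating the number of branches above $T$ with $\lambda k$ gives the fixed-point equation
\begin{equation}
(1 - p_R)\bar\Phi(t) + p_R \Pr(Z + \rho \, e^{U'} > t) = \lambda,
\end{equation}
with $p_R = 1 - e^{-\lambda}$, and the recall equals $\Pr(Z + \rho \, e^{U'} > t)$. For $\lambda$ small enough that $\Phi^{-1}(1-\lambda) > \rho$, the median of $Z + \rho \, e^{U'}$ lies below the threshold, forcing the recall below $1/2$. Solving the coupled equation for larger $\lambda$ using standard Gaussian-CDF bounds yields the stated linear-in-$1/\lambda$ correction; the constants $0.65$ and $0.15$ arise from matching those bounds against the specific value of $\rho$.

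The hardest step will be making the recall bound quantitatively tight with high probability. The fixed-point equation has no closed form, so establishing the correct linear correction requires a careful combination of two-sided Gaussian-CDF bounds, concentration control for $\max X_i$ within each branch population, and the Poisson approximation to handle the rare branches with $\geq 2$ relevant docs. Converting the CLT-based asymptotic approximations into non-asymptotic high-probability statements, uniformly over the range of $\lambda$ where the bound is nontrivial, is also delicate, since the effective sample sizes differ between the relevant and non-relevant populations and both enter the threshold equation simultaneously.
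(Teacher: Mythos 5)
Your decomposition, CLT approximation of the non-relevant branch scores, Poisson approximation to the relevant branch count, and the introduction of the boost-to-noise ratio $\rho = e^\delta/\sigma \approx 1.36$ all match the paper's conceptual roadmap, and your top-$1$ precision argument via Gaussian extreme-value asymptotics is equivalent to the paper's (which bounds $\Pr[\text{non-relevant branch exceeds } \mu \text{ by } e^\Delta\sqrt{0.8n}]$ and union-bounds over $k$ branches). However, for the recall bound you stop at a qualitative plan rather than a proof: you set up a fixed-point equation for the threshold $t$ and correctly argue recall $< 1/2$ when $\Phi^{-1}(1-\lambda) > \rho$, but then assert that ``solving the coupled equation \dots\ yields the stated linear-in-$1/\lambda$ correction; the constants $0.65$ and $0.15$ arise from matching those bounds,'' which you yourself flag as ``the hardest step.'' That is precisely the step that carries the content of claim (1), and it is left undone.

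The paper closes this gap with a direct barrier counting argument rather than a threshold fixed-point. It places a barrier at $B = \mu + \sqrt{0.8n}$, uses the fact that the median of the relevant-doc score $e^{U}$ is exactly $e^\delta = \sqrt{0.8n}$ (so about half the relevant branches have boost below the barrier gap), and notes that the pairwise difference of two non-relevant branch scores is $\mc{N}(0, 0.8n)$, so a given non-relevant branch exceeds another by $\sqrt{0.8n}$ with probability $\Pr[Z > 1] \approx 0.15$. Counting then gives at most $0.5\lambda k + \max\{0.5\lambda k - 0.15(k - \lambda k), 0\}$ relevant branches in the top $\lambda k$, which after dividing by $\lambda k$ is exactly $0.5 + \max\{0.65 - 0.15/\lambda, 0\}$. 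Your fixed-point formulation is cleaner and in principle more general, but without carrying out the computation you cannot claim the theorem's specific constants; if you want to keep your route, you would need to solve the threshold equation using two-sided Gaussian tail bounds and show the resulting recall upper bound matches (or is dominated by) the stated expression, rather than gesturing at it. As a separate small point, your crossover at $\lambda \approx 0.087$ (where $\Phi^{-1}(1-\lambda) = \rho$) does not align with the theorem's crossover at $\lambda \approx 0.23$ (where $0.65 - 0.15/\lambda = 0$), another sign that the quantitative matching is not yet established.
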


\begin{proof}
    First, we show the distribution of values for non-relevant branches.
    For each non-relevant branch, the score $S$ is the sum of $n$ i.i.d. random variables from $\rv{Y}_i=e^{\rv{X}_i}$, where $\rv{X}_i \sim \ms{Uniform}(-1, 1)$.
    As $\E[\rv{Y}] \approx 1.175$, and $\Var[\rv{Y}] \approx 0.4$, we have $\E[\rv{S}] = 1.175 n$ and $\Var[\rv{S}] = 0.4 n$.
    According to the central limit theorem, the distribution of $S$ is approximately $\mc{N}(1.175 n, 0.4 n)$.
    Therefore, for two non-relevant branches, the difference of their scores is approximately $\mc{N}(0, 0.8 n)$, and $\Pr[S_1 - S_2 \ge \sqrt{0.8 n}] \approx 0.15$.

    Next, we estimate the number of relevant branches.
    As the proportion $p=\frac{\lambda k}{kn}$ is small, and $np = \lambda$, we can approximate the number of relevant docs, $\#\rv{R}$, in each branch as a Poisson distribution with parameter $\lambda$.
    Then we have
    $\Pr[\#\rv{R}=0] = e^{-\lambda} \approx 1-\lambda$, and $\Pr[\#\rv{R}=1] = \lambda e^{-\lambda} \approx \lambda,$
    Therefore, there are aproximately $\lambda k$ relevant branches and $k-\lambda k$ non-relevant branches.

    We consider a barrier $B=1.175 n + \sqrt{0.8 n}$ and the event that some non-relevant branches are above the barrier and half of relevant branches are below the barrier.
    According to the data model, with high probability, half of the relevant documents have a score below $\sqrt{0.8 n}$.
    Since each non-relevant branch has a probability of $0.15$ to exceed another by $\sqrt{0.8 n}$, we have that, with high probability, at least $0.15 (k-\lambda k)$ non-relevant branches will exceed one of the low-score relevant branches.
    Thus at most $0.5\lambda k + \max\{0.5\lambda k - 0.15 (k-\lambda k), 0\}$ relevant branches will be in the top-$\lambda k$.
    Then the recall is at most $0.5 + \max\{0.65 - 0.15/\lambda, 0\}$.
    
    For the top-$1$ precision, the maximum score of the relevant documents can approach $\delta+\Delta$ w.h.p.
    And for a normal distribution $\rv{D}$ with mean $0$ and variance $\delta^2=0.8 n$, $\Pr\left[\frac{\rv{D}}{\delta}\ge \epsilon\right] \le O(\frac{1}{\epsilon}e^{-\epsilon^2/2})$.
    If we let $\epsilon=e^\Delta$, we have that the probability there is a non-relevant branch exceeding by $e^\Delta\sqrt{0.8 n}$ is $o(1)$.
    Then w.h.p. the top-$1$ precision is $1$.
\end{proof}

\begin{remark}
    In fact, the recall will be lower than $0.5$ if $\lambda$ is small enough because more non-relevant branches will be much higher than the barrier.
    The result mainly comes from the carefully designed score of relevant documents which is linear in $n$.
    It may not hold for extremely skewed distribution of scores, e.g., the relevant score is exponentially large, which actually corresponds to the ``amplification'' discussed in Section~\ref{sub:solution}.
    % It is also possible to allow for larger range of non-relevant scores to increase the recall, and we would still be able to get a similar upper bound with careful analysis.
\end{remark}

\newpage
\bibliographystyle{ACM-Reference-Format}
\bibliography{references}

\end{document}